\documentclass{article}

\usepackage[preprint]{neurips_2026}
\usepackage{amsthm}
\usepackage{amsmath}
\usepackage{enumitem}
\usepackage{graphicx}
\usepackage{booktabs}
\usepackage{multirow}
\usepackage{ulem} %razi added for \sout

\usepackage{amssymb}   % math (Delta, bold, etc.)
\usepackage{tcolorbox}          % for the prompt box
\usepackage{xcolor}             % colors used in tcolorbox
\usepackage{hyperref}           % for \label and \ref

\usepackage[utf8]{inputenc} % allow utf-8 input
\usepackage[T1]{fontenc}    % use 8-bit T1 fonts
\usepackage{hyperref}       % hyperlinks
\usepackage{url}            % simple URL typesetting
\usepackage{booktabs}       % professional-quality tables
\usepackage{amsfonts}       % blackboard math symbols
\usepackage{nicefrac}       % compact symbols for 1/2, etc.
\usepackage{microtype}      % microtypography
\usepackage{xcolor}         % colors

\newtheorem{theorem}{Theorem}
\newtheorem{proposition}{Proposition}

\newtheorem{definition}{Definition}

\title{From Talking Words to Sharing Thoughts:\\ Scalable Multi-LLM Aggregation via Structured Message Passing}
\author{%
  Niloufar Mehrabi\\
  School of Computing\\
  Clemson University, SC, USA\\
  \texttt{nmehrab@clemson.edu} \\
  \And
  Sayed Pedram Haeri Boroujeni \\
  School of Computing \\
  Clemson University, SC, USA\\
  \texttt{shaerib@clemson.edu} \\
  \AND
  Abolfazl Razi \\
  School of Computing\\
  Clemson University, SC, USA \\
  \texttt{arazi@clemson.edu} \\
}

\begin{document}

\maketitle

\begin{abstract}
The emergence of specialized, domain-tuned Large Language Models (LLMs) has demonstrated that smaller models can achieve expert-level performance in specific tasks, while struggling in out-of-domain settings.  Current ensemble methods to combine their complementary expertise primarily rely on iterative re-prompting or cross-model refinement. These approaches suffer from high computational costs and latency because they require repeated LLM inference calls. Furthermore, naive aggregation often leads to anchor corruption, in which noise propagated from weaker models degrades the performance of the most accurate expert. To address these challenges, we propose a framework that integrates model predictions at the semantic layer using a bipartite factor graph. In this architecture, individual LLMs are represented as variable nodes, while a set of check nodes assess their consistency based on diverse epistemic criteria. We develop a message-passing protocol inspired by error-recovery systems to resolve disagreements iteratively. Furthermore, we introduce an asymmetric damping mechanism that protects high-reliability anchor nodes from being overridden by the ensemble majority. Unlike existing methods, our approach operates on output distributions and requires no additional LLM calls during the refinement phase. Evaluating on four benchmarks, including MMLU, MMLU-Pro, GPQA, and MedMCQA, our method demonstrates a 97\% reduction in token usage and up to a 6$\times$ decrease in API calls, reducing inference time from several minutes to mere milliseconds while consistently outperforming leading multi-agent baselines. These results suggest that graph-based belief propagation is a robust, high-speed, and scalable alternative to the current multi-agent LLM systems. The full pipeline and code will be made public.

\end{abstract}

% ═══════════════════════════════════════════════════════════════
\section{Introduction}
% ═══════════════════════════════════════════════════════════════

% The growing diversity of Large Language Models (LLMs) presents a fundamental opportunity for ensemble methods: \arr{using colon ":" is not very common in scientific texts and more often generated by LLMs. For example, here you can simply use ", since" or " because"} no single model excels across all domains, but different models exhibit complementary strengths~\cite{jiang2023llm}. This is particularly evident among distilled and domain-specific models, which often exhibit complementary strengths. % A model fine-tuned on biomedical literature may exhibit reduced performance on formal reasoning tasks, while a model trained on scientific corpora may be less effective in legal or humanities domains. Even among broadly capable models, consistent cross-domain superiority remains elusive. \arr{The opening is very good but can you give specific example like changing "A model fine-tuned on biomedical literature may exhibit reduced performance on formal reasoning tasks" to "Model XXX which is fine-tuned on biomedical literature (achieving x\% accuracy), exhibits poor performance on formal reasoning tasks (Accuracy: X\%)" to make it more convincing and you can quote/find such results from previous work} 

The growing diversity of Large Language Models (LLMs) presents a fundamental opportunity for ensemble methods, since no single model excels across all domains and different models exhibit complementary strengths~\cite{jiang2023llm}. This is particularly evident among distilled and domain-specific models~\cite{boroujeni2026don}. 
A model fine-tuned on specialized literature may achieve strong in-domain performance while remaining less reliable outside its specialization. For example, Med-PaLM 2 achieves 86.5\% accuracy on MedQA, a USMLE-style medical benchmark, through medical domain adaptation and ensemble refinement~\cite{singhal2025toward}. However, this does not imply uniform superiority across broader reasoning domains, and prior analyses of MMLU show substantial variation across subjects such as formal logic and professional law~\cite{gema2025we}. This specialization--generalization gap motivates robust ensemble fusion across complementary models. This complementarity raises a natural question of can we systematically combine multiple LLMs to exceed the accuracy of any single model? Recent multi-agent frameworks~\cite{yun2026graph, wang2406mixture} have demonstrated that combining multiple LLMs can indeed improve accuracy over individual models. However, these approaches predominantly rely on inter-agent communication at the input layer, where models read, critique, and revise each other's full-text responses across multiple generative rounds. This incurs a token usage of $\mathcal{O}(LNd)$, where $L$ is the number of communication rounds, $N$ is the number of agents, and $d$ is the token length. Consequently, the system latency is dominated by $\mathcal{O}(LN)$ sequential inference calls. For instance, Mixture-of-Agents (MOA)~\cite{wang2406mixture} can require up to 19 API calls and 56k tokens per question, a roughly $50\times$ increase in token overhead compared to a single-LLM counterpart. It is noteworthy that this overhead is not inherent to ensemble integration, but arises from performing inter-model collaboration at the natural-language representation layer.
% It is noteworthy that this overhead is not inherent to ensemble integration, but arises from operating at an inefficient representational level \ar{representation layer [use more accurate terms for technical concepts, even if it sounds repetitive or over-used.]}. 
Once an LLM converts its prediction into natural language, its uncertainty is compressed into a discrete token sequence, requiring other models to infer the underlying belief state from unstructured text. To avoid this bottleneck, we perform integration directly at the output logit layer, where each model's softmax distribution provides a compact representation of its epistemic state, preserving confidence margins and secondary candidate preferences. As an initial rigorous setting, we focus on Multiple-Choice Question Answering (MCQA), where the softmax over $K$ candidates makes the fusion problem well-defined and directly measurable.

\begin{figure}[t]
    \centering
    \includegraphics[width=\textwidth]{Pedram.jpg}
    \caption{\textbf{Overview of the proposed logit-level multi-LLM ensemble framework.}
    \textbf{Left:} Existing text-based multi-agent systems rely on iterative natural-language communication, requiring repeated LLM calls and causing high token usage, latency, and noisy reasoning propagation.
    \textbf{Middle:} Our framework replaces text exchange with semantic-layer collaboration at the output logit level. Each LLM performs one forward pass to produce a softmax distribution over $K$ candidates. Consensus cases follow fast-path fusion, while disagreement cases activate a factor graph where variable nodes represent model beliefs and check nodes enforce calibration, domain, and agreement constraints through adaptive asymmetric belief propagation.
    \textbf{Right:} This design eliminates additional LLM calls during refinement, enabling sub-millisecond aggregation, lower latency, and over $97\%$ token reduction while matching or outperforming text-based multi-agent baselines.
    % \arr{This figure and caption can be made more informative. The caption does not reflect the image. You can say something like "Left: current XXX methods, where output of each LLM is [augmented with/added to/...?] query..., Middle: proposed method.... Also, the figure is fancy but can be made more informative, like on the left, you can add some legends like Q, IP, and OP to present Query, Input Prompt, and Output Prompt more clearly. The same for middle, reduce the use of fancy icons (for instance, use the same icon for all LLMs on the middle top, and no need to repeat softmax distribution over K candidate three times), and use more relevant and informative text/shapes. Softmas is a typo, I guess. 2- In the middle, the output bar chart of each LLM would be better to match the corresponding bar chart of the variable node. } \nm{Done! We used different icons for LLM models because we wanted to show they are not same models (for example, Qwen, Llama,...)! }
    }
    \label{fig:Overview}
\end{figure}

Specifically, we present a factor-graph framework for logit-level LLM ensemble aggregation. Our method is inspired by sum-product message passing over bipartite graphs used for error recovery of Low-Density Parity-Check (LDPC) in modern communication systems~\cite{7529226}, 
where structured iteration reconciles noisy and partially unreliable observations into a globally consistent solution. This approach enables iterative belief updates to refine LLM outputs, thereby replacing text exchange across consecutive LLM calls. In our framework, each LLM is represented as a variable node holding a probability distribution over answer candidates, and a set of check nodes enforcing soft consistency constraints between models with respect to properties such as per-model calibration, domain expertise, and inter-model agreement patterns, forming a bipartite graph. Before iterating, our method classifies the ensemble's joint epistemic state, based on inter-model agreement, confidence margins, and a domain group mapped from the question's subject, into one of two general regimes, consensus or disagreement. Questions falling into the consensus regime are resolved immediately without message passing, reserving computation for genuinely contested cases.

% \arr{better to skip sparse here, because in your fig all check nodes are connected to all variable nodes making a dence and indeed fully connected graph. if you highlight sparsity, you should say later "to benefit from accelerated performance of BP over sparse bipartite graphs with fewer short loops, we impose sparsity through random edge dropping or ...." } 

% \arr{You may mentioned later but we need to clarify the vision, are we cross-checkinf variable nodes from different perspectives to identify the level of agreement using different metrics or do we cross-check them from different perspectives. 2-Is it a binary consensus vs disagreement or a soft measure gauging the level of agreement and using it to optimally and iteratively refine LLM opinions? I lean toward the latter. }   \arr{Never mind, I see later that you frame the entire work based on two regimes, consensus vs disagreement, so stick with your method and disregard my second point to avoid substantial changes!}

A central challenge in heterogeneous ensembles is \emph{anchor corruption}, where naive aggregation degrades the strongest model by incorporating noise from weaker ones~\cite{zeng2026harnessing}. Our framework addresses this through adaptive asymmetric damping, where a model's receptiveness to incoming messages is governed by a dynamically derived trust ratio. By integrating historical reliability with per-question confidence, this mechanism naturally shields high-reliability models. 
% Unlike static weighting schemes, this approach allows the damping factors to emerge directly from the epistemic context of the ensemble, ensuring structural stability without the need for manual, heuristic-based tuning of individual model weights. 
The final answer is produced by iteratively integrating the models' refined belief distributions, with weights reflecting domain-conditional accuracy, pairwise relational reliability, and individual calibration quality. Critically, our aggregation procedure requires exactly one forward pass per model, totaling $N$ total passes with zero inter-model communication, followed by iterative belief updates. This stands in direct contrast to the $\mathcal{O}(LN)$ calls required by existing frameworks, effectively reducing the iterative generation overhead from $\mathcal{O}(LNd)$ to $\mathcal{O}(1)$ additional tokens beyond the initial inference.

Through comprehensive evaluation on four benchmarks (MMLU, MMLU-Pro, GPQA, and MedMCQA), we demonstrate that our logit-level architecture yields over a $97\%$ reduction in token usage, up to a $3$--$6\times$ decrease in API calls, and reduces inference time from several minutes to mere milliseconds compared to text-based systems,  while consistently matching or outperforming leading multi-agent baselines across these diverse datasets. 

Our primary contributions are as follows:
\begin{itemize}
    \item \textbf{Logit-Level Multi-LLM Collaboration:} We reframe the multi-LLM combination as a distribution fusion problem operating directly at the output layer, replacing the $\mathcal{O}(LNd)$ text-exchange bottleneck with message passing on a bipartite factor graph that requires exactly $N$ forward passes and completes in sub-millisecond time on CPU.
    
    \item \textbf{Epistemic Regime Detection and Routing:} We introduce a principled classification of the ensemble's joint belief state into consensus and disagreement regimes, allowing the framework to bypass message passing for uncontested questions and concentrate computation on cases with genuine inter-model disagreement.

    \item \textbf{Anchor Corruption Analysis:} We formally identify anchor corruption as a fundamental failure mode of naive aggregation in heterogeneous ensembles, and address it through an adaptive damping scheme that protects the most reliable model from distortion by weaker ensemble members without any hyperparameter tuning.

\end{itemize}

% ═══════════════════════════════════════════════════════════════
\section{Related Works}
% ═══════════════════════════════════════════════════════════════

% \arr{no need for titles, because we have only two paragraphs}
% \paragraph{\sout{Multi-LLM Frameworks.}}
Coordinating multiple LLMs to jointly produce better answers has recently gained momentum, with the majority of methods relying on natural-language interaction among models~\cite{du2024improving}. Du et al.~\cite{du2024improving} showed that multiple LLM instances can propose and debate answers over several rounds to improve factuality and reasoning, describing this approach as a society of minds. Liang et al.~\cite{liang2024encouraging} extended this direction with a Multi-Agent Debate framework designed to address the Degeneration-of-Thought problem in self-reflection, while Chan et al.~\cite{chan2023chateval} introduced ChatEval, where role-specialized LLM agents discuss and evaluate generated responses. More recently, Mixture-of-Agents (MoA)~\cite{wang2406mixture} organized agents into multiple generative layers, where each agent conditions its response on the outputs of agents in the previous layer. A recent study revisiting MoA~\cite{li2025rethinking} found that aggregating outputs from a single strong model can sometimes outperform multi-model mixing, raising questions about whether model diversity at the text level is always beneficial. Additionally, SELENE~\cite{verma2026selene} introduced selective debate initiation with evidence-weighted judging, reducing token usage by nearly 50\% while maintaining or improving accuracy and calibration. In these frameworks, inter-agent communication is primarily mediated through natural-language responses and successive LLM recalls. This incurs an $\mathcal{O}(LNd)$ token cost and requires models to reconstruct each other's belief states from unstructured text, a process that can introduce interpretation errors and compounding latency.

% \arr{Good point; although our method has lower token use, the downside is that it is just post-processing of original results. I think we also should think of ways to orchestrate LLM collaboration by making a BP algorithm in a way that triggers/invokes LLMs to rethink and generate better results [for the next paper]!!!}

% \arr{make some connections to the previous, like an alternative way is ....}
An alternative direction avoids explicit debate by aggregating multiple model outputs after inference, without natural-language communication between models. For instance, Self-consistency~\cite{wang2022self} samples multiple reasoning paths to select the most consistent answer, demonstrating that simple aggregation improves accuracy without additional training. LLM-Blender~\cite{jiang2023llm} extends this idea to heterogeneous model outputs through pairwise ranking followed by generative fusion, though it trades inference efficiency for output quality. Beyond voting and generative fusion, routing-based methods such as Smoothie~\cite{guha2024smoothie} perform label-free selection by modeling the relationship between candidate LLM outputs and latent true outputs. FusionRoute~\cite{xiong2026token} explores token-level multi-LLM collaboration, using a lightweight router to select an expert at each decoding step and add complementary logits for next-token refinement. LatentMAS~\cite{zou2025latent} moves closer to semantic-level collaboration by enabling LLM agents to generate latent thoughts using last-layer hidden states and exchange information through shared latent working memory stored in KV caches, rather than relying on natural-language messages. However, LatentMAS performs collaboration inside the LLM inference process through latent thought generation and KV-cache transfer, whereas our method operates after a single forward pass on final softmax distributions. Thus, our approach requires no latent generation, KV-cache exchange, or additional model-internal interaction, and performs all collaboration via lightweight message passing over candidate-level beliefs.

% \arr{this method sounds good [if make experts exchange semantic info, and then it is close to ours], and we need to clarify why our method is better!!!}. \nm{Done!}

% Our method departs from both families by operating directly on softmax distributions through message passing on a sparse factor graph. \arr{before this sentence, you need to say something like, [the common spirit/the essence] of these methods is . [be precise, if applicable, you can say they all operate by re-invoking LLMs over and over]....then you can say, In contrast, we do this and that} It requires exactly $N$ forward passes and sub-millisecond CPU aggregation while explicitly modeling inter-model agreement, domain-conditional reliability, and calibration quality as first-class signals.

% ═══════════════════════════════════════════════════════════════
\section{Problem Definition}
% ═══════════════════════════════════════════════════════════════

\begin{definition}[Multi-LLM Ensemble for MCQA]
\label{def:ensemble}
Given $N$ language models $\{M_1, \ldots, M_N\}$ and a question $q$ with $K$ candidate answers $\mathcal{A} = \{a_1, \ldots, a_K\}$, let $a^* \in \mathcal{A}$ denote the ground truth. Each model $M_i$ produces a belief distribution $b_i \in \Delta^{K-1}$ via the softmax of its output logits $\ell_i$:
\begin{equation}
\label{eq:belief}b_i(a_k) = \frac{\exp(\ell_i(a_k))}{\sum_{j=1}^K \exp(\ell_i(a_j))}, \quad k = 1, \ldots, K
\end{equation}
Our objective is to derive an aggregated belief $b^* = \mathcal{G}(b_1, \dots, b_N)$ that maximizes $P(\arg\max_{a_k} b^*(a_k) = a^*)$ while adhering to strict efficiency constraints.
\end{definition}

\paragraph{Inference Efficiency.}
To ensure scalability, the operator $\mathcal{G}$ requires exactly $N$ total LLM forward passes (one per model) with zero inter-agent textual communication. All post-extraction fusion is computable in $\mathcal{O}(1)$ time relative to LLM inference.

% \paragraph{The Anchor Corruption Limit.} 
A central challenge in heterogeneous ensembles is the risk that weaker models may degrade the performance of a superior "anchor" model. We formalize this via the following proposition:

\begin{proposition}[Anchor Corruption]
\label{prop:anchor_corruption}
Let $M_{i^*}$ be an anchor model with accuracy $p_{i^*} > 0.5$, and let $M_1, M_2$ be two weaker models with equal accuracy $p_w < p_{i^*}$. Under a majority voting scheme with conditionally independent errors, the ensemble accuracy $p_{\mathrm{ens}}$ is strictly less than $p_{i^*}$ if:
\begin{equation}
\label{eq:corruption_condition}
p_w < \frac{p_{i^*} - \sqrt{p_{i^*}(1-p_{i^*})}}{2p_{i^*} - 1}
\end{equation}
The detailed proof is provided in Appendix~\ref{app:anchor_proof}. For clarity, a summary of notation is provided in Appendix~\ref{app:notation}.
\end{proposition}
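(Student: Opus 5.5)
The plan is to reduce majority voting over three voters to a binary correct/incorrect model and compute $p_{\mathrm{ens}}$ in closed form. Let $X_{i^*}, X_1, X_2 \in \{0,1\}$ indicate whether $M_{i^*}$, $M_1$ and $M_2$ answer correctly. By the conditional-independence assumption these are independent Bernoulli variables with parameters $p_{i^*}$, $p_w$ and $p_w$.

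\textbf{Modeling convention.} Under majority voting the ensemble is correct exactly when at least two of the three voters are correct. The case with a single correct voter is counted as an ensemble error. This is exact when incorrect votes coincide, as in the binary-outcome setting. When the wrong answers are distinct it amounts to pessimistic tie-breaking. I state this convention explicitly, because it is the one modeling choice the proof depends on.

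\textbf{Computing the gap.} Conditioning on $X_{i^*}$:
\begin{itemize}
\item if the anchor is correct, the ensemble is correct unless both weak models fail, giving probability $1-(1-p_w)^2$;
\item if the anchor is wrong, the ensemble is correct only if both weak models are correct, giving probability $p_w^2$.
\end{itemize}
Hence $p_{\mathrm{ens}} = p_{i^*}\bigl(1-(1-p_w)^2\bigr) + (1-p_{i^*})p_w^2$. Subtracting $p_{i^*}$ gives the key identity
\begin{equation*}
p_{\mathrm{ens}} - p_{i^*} = (1-p_{i^*})\,p_w^2 - p_{i^*}(1-p_w)^2 .
\end{equation*}
The two terms are the probability that the weak pair rescues a wrong anchor and the probability that it overrides a correct anchor.

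\textbf{Solving the inequality.} Since all quantities are nonnegative, $p_{\mathrm{ens}} < p_{i^*}$ is equivalent, after taking square roots, to $\sqrt{1-p_{i^*}}\,p_w < \sqrt{p_{i^*}}\,(1-p_w)$. This is the same as
\begin{equation*}
p_w < \frac{\sqrt{p_{i^*}}}{\sqrt{p_{i^*}}+\sqrt{1-p_{i^*}}} .
\end{equation*}
To match the stated form~\eqref{eq:corruption_condition}, multiply numerator and denominator by $\sqrt{p_{i^*}}-\sqrt{1-p_{i^*}}$. This factor is strictly positive precisely because $p_{i^*}>0.5$, which is where that hypothesis is used. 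The denominator becomes $p_{i^*}-(1-p_{i^*}) = 2p_{i^*}-1$ and the numerator becomes $p_{i^*}-\sqrt{p_{i^*}(1-p_{i^*})}$, which yields~\eqref{eq:corruption_condition}.

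\textbf{Sanity check.} I would also verify that the threshold lies strictly between $1/2$ and $p_{i^*}$ for $p_{i^*}\in(0.5,1)$. For example, $p_{i^*}=0.9$ gives a threshold of $0.75$. This shows the condition is nonvacuous and consistent with the assumption $p_w<p_{i^*}$.

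\textbf{Main obstacle.} The algebra is routine. The real issue is the modeling step: justifying that the single-correct-voter case counts as a failure when $K>2$ and the weak models' wrong answers may differ. I would handle this by stating the binary-outcome (coinciding errors) or pessimistic tie-breaking convention as part of the majority-voting scheme. I would then note that under other tie-breaking rules the computed $p_{\mathrm{ens}}$ is a lower bound, and the inequality becomes a sufficient condition under that model.
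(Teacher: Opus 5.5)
Your argument is correct and is essentially the paper's proof. The paper also conditions on whether the anchor is correct, obtains the same closed form $p_{\mathrm{ens}} = p_{i^*}[1-(1-p_w)^2] + (1-p_{i^*})p_w^2$, and silently uses the same convention that the ensemble needs at least two correct votes.

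The only difference is the algebraic finish. The paper expands $p_{\mathrm{ens}}<p_{i^*}$ into the quadratic $(2p_{i^*}-1)p_w^2-2p_{i^*}p_w+p_{i^*}>0$, uses $p_{i^*}>0.5$ to make the leading coefficient positive, and takes the lower root. Your rescue-minus-override factorization followed by square roots gives an exact equivalence with the cleaner threshold $\sqrt{p_{i^*}}/(\sqrt{p_{i^*}}+\sqrt{1-p_{i^*}})$, and you never have to discard the upper root. One small correction: rationalizing only needs the factor $\sqrt{p_{i^*}}-\sqrt{1-p_{i^*}}$ to be nonzero, not positive. So in your route $p_{i^*}>0.5$ is used only to ensure $p_{i^*}\neq 1/2$; your equivalence itself holds for every $p_{i^*}\in[0,1]$.

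The closing remark of your ``main obstacle'' paragraph, however, is backwards. Under any other tie-breaking rule the true accuracy is at least your computed value. So your computed value falling below $p_{i^*}$ says nothing about the true one. The threshold then becomes a \emph{necessary} condition for corruption, not a sufficient one, and it does fail as a sufficient condition.

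Here is a counterexample. Take $K=10$, wrong answers uniform over the nine distractors, three-way ties broken in favor of the anchor, $p_{i^*}=0.9$ and $p_w=0.7<0.75$. A correct anchor is then outvoted only when both weak models pick the same distractor. Hence $p_{\mathrm{ens}} = 0.9\,(1-0.09/9)+0.1\cdot 0.49 = 0.94 > 0.9$.

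So drop that remark. Keep the ``at least two correct votes'' convention (binary outcomes or coinciding errors) as an explicit hypothesis of the proposition. That is exactly the setting in which both your proof and the paper's are valid.
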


% \begin{remark}[Design Implications]
% This result demonstrates that naive ensembling can be counterproductive, particularly in heterogeneous settings. It necessitates two core features of the [our method's name] architecture: (1) \textbf{asymmetric damping} to protect high-reliability anchor nodes from being overridden by noise, and (2) \textbf{sparse connectivity} to ensure that weaker models only influence the ensemble when they provide high-confidence, complementary signals.
% \end{remark}

% ═════════════════════════════════════════════════════════════════════════
\section{Proposed Method:} \label{sec:method}
% ═════════════════════════════════════════════════════════════════════════

\subsection{Architecture Overview}

Figure~\ref{fig:Overview} provides an overview of the proposed framework, \textit{\textbf{From Talking Words to Sharing Thoughts}}, 
% a semantic-layer fusion approach that performs logit-level belief propagation over a factor graph for efficient and robust multi-LLM integration. We introduce [our method's name], 
a dynamic inference graph designed to aggregate multiple language models without the severe latency of iterative text generation. By evaluating the structural domain of the query alongside the joint uncertainty profile of the ensemble, the framework conditionally routes execution through a structured message-passing architecture. Crucially, all internal computations operate solely on the initial belief distributions $\{\mathbf{b}_i\}$ obtained from the original forward passes. This design completely eliminates the need for subsequent autoregressive generation steps, successfully resolving complex inter-model disagreements while maintaining strict sub-millisecond inference efficiency.

% ─────────────────────────────────────────────────────────────────────────
\subsection{Epistemic State Detection}\label{sec:layer1}
% ─────────────────────────────────────────────────────────────────────────
Traditional ensembles treat model outputs as isolated votes, collapsing complex probability distributions into scalar confidences. We instead diagnose the geometry of disagreement before aggregation begins. By computing a joint epistemic signature that captures entropy, inter-model margins, distributional shapes, and pairwise divergences, we explicitly map the ensemble's uncertainty profile.

\begin{definition}[Epistemic Signature]
For beliefs $\{\mathbf{b}_1, \ldots, \mathbf{b}_N\}$, the signature $\sigma(q)$ is the tuple:
\begin{equation}
\sigma(q) = \bigl(
    \{H_i\}_{i=1}^N,\;
    \{c_i\}_{i=1}^N,\;
    \{m_i\}_{i=1}^N,\;
    \{\tau_i\}_{i=1}^N,\;
    \{D_{ij}\}_{i<j},\;
    s
\bigr)
\end{equation}
where $H_i = -\sum_k b_{ik} \log b_{ik}$ is the Shannon entropy, $c_i = \max_k b_{ik}$ is the top-1 confidence, and $m_i = b_{i(1)} - b_{i(2)}$ is the decision margin (with $b_{i(k)}$ denoting the $k$-th largest probability mass). Furthermore, $\tau_i \in \{\textsc{peaked}, \textsc{spread}, \textsc{bimodal}\}$ represents the categorical uncertainty geometry (formally derived in Section~\ref{sec:topology}), $D_{ij} = \frac{1}{2}\bigl(D_{\mathrm{KL}}(\mathbf{b}_i \parallel \mathbf{b}_j) + D_{\mathrm{KL}}(\mathbf{b}_j \parallel \mathbf{b}_i)\bigr)$ is the symmetric Kullback-Leibler divergence, and $s = \frac{1}{N} \max_k \sum_{i=1}^N \mathbb{I}(\arg\max \mathbf{b}_i = k)$ is the top-1 agreement.
\end{definition}

This signature relies entirely on the foundational forward passes, allowing it to be evaluated rapidly in $O(N^2 K)$ time, where $K$ is the number of candidate answers, representing a negligible computational overhead. Based on $\sigma(q)$, the query is classified into one of six distinct epistemic regimes, as summarized in Table~\ref{tab:epistemic_regimes}. This regime acts as the central algorithmic router. Structurally unanimous regimes (such as \textsc{Consensus-High}) bypass the factor graph entirely and proceed directly to a fast-path output fusion. Conversely, disjointed regimes trigger the full iterative belief propagation network to resolve the detected conflicts.

\begin{table}[htbp]
\centering
\small
\caption{Classification conditions for the six epistemic regimes based on the joint epistemic signature.}
\label{tab:epistemic_regimes}
\begin{tabular}{@{}ll@{}}
\toprule
\textbf{Regime} & \textbf{Condition} \\
\midrule
\textsc{Consensus-High} & All $N$ agree, all $c_i > \theta_c$ (confidence threshold)\\
\textsc{Consensus-Low}  & All $N$ agree, some $c_i \leq \theta_c$ (confidence threshold) \\
\textsc{Split-Confident}  & $(N{-}1)$ agree, all $c_i > 0.5$ \\
\textsc{Minority-Dissent} & $(N{-}1)$ agree, some $c_i \leq 0.5$ \\
\textsc{Full-Disagreement}& No $(N{-}1)$ majority, moderate entropy \\
\textsc{Confused}         & No $(N{-}1)$ majority, high entropy \\
\bottomrule
\end{tabular}
\end{table}

While the epistemic regime dictates whether the factor graph is instantiated, the relational updates within that graph are strictly conditioned on the query's subject matter. To capture this, the framework simultaneously performs a domain coset classification. Drawing on classical coding theory, where complex signals are grouped into broader structural categories to guide error correction, we define a mapping $\phi: \mathcal{Q} \to \mathcal{D}$ that assigns each query $q$ to a macroscopic domain $d \in \{d_1, \ldots, d_L\}$ (e.g., STEM, Humanities). This domain strictly anchors all subsequent relational calibrations. Formally, for any inter-model prediction pattern $\pi_{ij}$, the conditional probability evaluates as $P(M_i \text{ correct} \mid \pi_{ij}, d) \neq P(M_i \text{ correct} \mid \pi_{ij})$, indicating that model reliability and inter-model dependence are non-stationary across different subjects. In standardized benchmarks, this mapping utilizes existing dataset metadata at zero computational cost, while in zero-shot environments, the domain fingerprint can be dynamically inferred from the joint distribution shapes computed within the epistemic signature.

% ─────────────────────────────────────────────────────────────────────────
\subsection{Bipartite Constraint Topology}
\label{sec:topology}
% ─────────────────────────────────────────────────────────────────────────
For queries exhibiting structural disagreement, we instantiate a dynamic factor graph $\mathcal{G} = (\mathcal{V}, \mathcal{F}, \mathcal{E})$ to execute Belief Propagation (BP). Our architecture draws structural inspiration from classical LDPC decoding~\cite{7529226}. While classical LDPC operates on binary log-likelihood ratios transmitted over noisy channels (see Appendix~\ref{app:Background} for the formal sum-product equations), our framework adapts these concepts to continuous, multi-class probability spaces generated by LLMs.

% \paragraph{Variable Nodes ($\mathcal{V}$).}
Each LLM is represented as a variable node $v_i \in \mathcal{V}$, initialized with the prior belief distribution $\mathbf{b}_i$. 
To strictly enforce the \emph{extrinsic information principle} of LDPC decoding (where a node updates its belief using intrinsic observations alongside incoming messages while excluding its own previous outgoing messages), updates in our graph are computed relative to the initial state. This prevents self-reinforcing echo effects during collaboration.
% \paragraph{Check Nodes ($\mathcal{F}$).}
Check nodes enforce relational and contextual constraints. Inspired by LDPC parity rules, each check node provides an independent measure of consistency. 
The active subset adapts to the specific epistemic regime:

\begin{itemize}[nosep,leftmargin=*]
\item \textbf{Distributional Profiling Node} ($\psi_{\text{prof}}$): Classifies each model's belief shape as \textsc{peaked}, \textsc{bimodal}, or \textsc{spread} based on confidence, margin, and entropy thresholds, then weights messages accordingly: peaked models contribute decisively, bimodal models contribute only their top-2 candidates, and spread models act as weak evidence.

\item \textbf{Calibrated Confidence Node} ($\psi_{\text{conf}}$): Weights messages by the alignment between each model's reported confidence and its historical accuracy at that confidence level.

\item \textbf{Margin Decisiveness Node} ($\psi_{\text{marg}}$): Evaluates the gap between the top-2 candidates ($m_i$), weighted by each model's overall accuracy. High-margin messages from accurate models are amplified as decisive signals, while low-margin messages are treated as tentative.

\item \textbf{Dissent Node} ($\psi_{\text{diss}}$): Active in \textsc{Minority-Dissent} and \textsc{Confused} regimes. Routes messages asymmetrically: majority models hear from dissenters, minority models hear from the majority, with historically accurate minorities receiving boosted weight.

\item \textbf{Domain Expertise Node} ($\psi_{\text{dom}}$): Maps the query to a domain group and weights each model's message by its domain-conditional accuracy, amplifying contributions from models with demonstrated expertise in the relevant subject area.

\item \textbf{Domain-Relational Node} ($\psi_{\text{dom\text{-}rel}}$): Combines domain-specific expertise with pairwise relational accuracy patterns. The message weight for model $j$ toward model $i$ is determined by $(\text{acc}(j,d) \cdot R_{ji}^{(d)})^{\gamma}$, where $R_{ji}^{(d)}$ captures the probability that model $j$ is correct given its agreement or disagreement pattern with model $i$ within domain $d$. This provides the finest-grained inter-model consistency signal in the factor graph.
\end{itemize}

% \paragraph{Edges ($\mathcal{E}$).}
In the default configuration, each variable node is connected to all active check nodes, forming a dense bipartite graph. The active subset of check nodes adapts to the detected epistemic regime: the margin ($\psi_{\text{marg}}$), calibration ($\psi_{\text{conf}}$), domain ($\psi_{\text{dom}}$), and domain-relational ($\psi_{\text{dom-rel}}$) nodes are active in all regimes that invoke BP. The distributional profiling node ($\psi_{\text{prof}}$) activates in \textsc{Split-Confident}, \textsc{Full-Disagreement}, and \textsc{Confused} regimes, while the dissent node ($\psi_{\text{diss}}$) activates only in \textsc{Minority-Dissent} and \textsc{Confused} regimes. This regime-conditional gating deactivates irrelevant check nodes entirely rather than selectively disconnecting individual edges, ensuring that each model receives corrective messages from all applicable constraints at each iteration.

% ─────────────────────────────────────────────────────────────────────────
\subsection{Logit-Level Message Passing and Asymmetric Damping}\label{sec:message_passing}
% ─────────────────────────────────────────────────────────────────────────
\subsubsection{Belief Softening (Channel Calibration)}

A primary obstacle in adapting belief propagation to instruction-tuned LLMs is their tendency toward overconfidence, often yielding near-deterministic distributions. In a standard BP framework, these sharply peaked priors saturate the variable nodes, causing incoming messages to have a negligible impact and effectively inhibiting the exchange of probability mass.

Drawing on the LDPC principle of calibrating channel Log-Likelihood Ratios (LLRs) to prevent premature decoder convergence, we introduce a pre-propagation softening phase. Before BP begins, the initial beliefs are flattened using temperature scaling with $T_{\text{soft}} > 1$:

\begin{equation}
\tilde{b}_{ik} = \frac{(b_{ik})^{1/T_{\text{soft}}}}{\sum_j (b_{ij})^{1/T_{\text{soft}}}}
\end{equation}

This allows the factor graph to iteratively route and accumulate evidence. Once the BP algorithm converges, the resulting distributions are re-sharpened using $T_{\text{sharp}} < 1$ to recover a decisive predictive distribution for the final aggregation stage.

\subsubsection{Adaptive Asymmetric Damping}
\label{sec:damping}

In heterogeneous ensembles, naive belief aggregation may lead to \textit{Anchor Corruption}, wherein the predictive performance of the most accurate model is degraded by conflicting signals from the surrounding ensemble. To mitigate this, we introduce an adaptive damping factor $\alpha_i \in [0, 1]$ designed to dynamically regulate the influence of incoming messages based on the relative expertise and certainty of the nodes. We first map the domain-conditional accuracy ($\text{acc}_i^{(d)}$) of each model $M_i$ to an epistemic evidence score:
\begin{equation}
E_i = \frac{\text{acc}_i^{(d)}}{1 - \text{acc}_i^{(d)}}
\end{equation}

The damping factor $\alpha_i$ is then defined as the complement of the model's normalized contribution to the ensemble's aggregate evidence, scaled by its top-1 confidence $c_i$:
\begin{equation}
\label{eq:posterior_trust}
\alpha_i = 1 - \frac{\bigl( E_i \cdot c_i \bigr)^{\gamma}}{\sum_{j=1}^{N} \bigl( E_j \cdot c_j \bigr)^{\gamma}}
\end{equation}
where $\gamma > 0$ is a sensitivity hyperparameter that governs the sharpness of the trust distribution. This formulation establishes a principled competitive balance. Models that exhibit high evidence and high localized certainty (large $E_i \cdot c_i$) command a larger share of the denominator, driving their specific $\alpha_i$ toward $0$. This renders such nodes structurally resistant to external signals, thereby protecting the anchor's initial belief. Conversely, models with lower relative evidence or high uncertainty receive an $\alpha_i$ closer to $1$, making them highly receptive to corrective messages from the ensemble constraints. The resulting log-domain belief update for each variable node $v_i$ is:
\begin{equation}
\label{eq:bp_update}
\log \mathbf{b}_i^{(t+1)} = \log \mathbf{b}_i^{(0)} + \alpha_i \sum_{f \in \mathcal{F}} \log \mathbf{m}_{f \to i}^{(t)}
\end{equation}
where $\mathbf{m}_{f \to i}^{(t)}$ denotes the message sent from check node $f$ to variable node $v_i$ at iteration $t$. Detailed check-node weighting rules are provided in Appendix~\ref{app:check_weights}. Crucially, by anchoring the update to the initial state $\mathbf{b}_i^{(0)}$ rather than the current state, the framework prevents runaway feedback loops and ensures that the initial model expertise is preserved against systemic noise.

\begin{theorem}[Anchor Stability Bound]
\label{thm:anchor_protection}
Under adaptive asymmetric damping where the anchor model $M_{i^*}$ possesses a damping factor $\alpha_{i^*}$, and given the non-cumulative extrinsic update rule in Eq.~\ref{eq:bp_update}, the anchor's maximum belief deviation at any iteration $T$ is strictly bounded independently of $T$:
\begin{equation}
\label{eq:anchor_bound}
D_{TV}(\mathbf{b}_{i^*}^{(0)}, \mathbf{b}_{i^*}^{(T)})
\leq
\alpha_{i^*} \cdot \sum_{f \in \mathcal{F}(i^*)} \mathcal{E}_f
\end{equation}
where $D_{TV}(\cdot,\cdot)$ denotes the total variation distance between two belief distributions, and $\mathcal{E}_f = \max_k \left| \log \mathbf{m}_{f \to i^*}(a_k) \right|$ is the maximum epistemic evidence of check node $f$. Under the adaptive formulation in Eq.~\ref{eq:posterior_trust}, $\alpha_{i^*} \to 0$ as the anchor's sharpened evidence $(E_{i^*} \cdot c_{i^*})^{\gamma}$ dominates the ensemble total $\sum_j (E_j \cdot c_j)^{\gamma}$, ensuring that the anchor's belief deviation remains tightly constrained regardless of the number of propagation rounds $T$. The detailed proof is provided in Appendix~\ref{app:anchor_stability}.
\end{theorem}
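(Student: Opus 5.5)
The plan is to exploit the fact that the update in Eq.~\ref{eq:bp_update} is non-cumulative. The iterate $\mathbf{b}_{i^*}^{(T)}$ depends on the past only through the \emph{current} messages $\mathbf{m}_{f\to i^*}^{(T-1)}$, always tilting the fixed prior $\mathbf{b}_{i^*}^{(0)}$. So it suffices to bound the total variation between a distribution and an exponential tilt of it whose tilt is uniformly bounded. If that bound uses only the per-node maxima $\mathcal{E}_f$, independence of $T$ is automatic, taking $\mathcal{E}_f$ as the supremum over iterations where needed.

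Concretely, I read Eq.~\ref{eq:bp_update} as holding up to the normalizing constant, since $\mathbf{b}^{(T)}$ must lie in $\Delta^{K-1}$, and I assume strictly positive messages. Write $\Delta_k = \alpha_{i^*}\sum_{f\in\mathcal{F}(i^*)} \log \mathbf{m}_{f\to i^*}^{(T-1)}(a_k)$. Then
\begin{equation*}
b^{(T)}_k = \frac{b^{(0)}_k e^{\Delta_k}}{\sum_j b^{(0)}_j e^{\Delta_j}}.
\end{equation*}
By the triangle inequality and the definition of $\mathcal{E}_f$, $|\Delta_k| \le \alpha_{i^*} S$ with $S=\sum_{f}\mathcal{E}_f$. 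Hence every $\Delta_k$ lies in an interval $[m,M]$ with $M-m\le 2\alpha_{i^*}S$.

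The crude route is to bound the ratios $b^{(T)}_k/b^{(0)}_k\in[e^{-2\alpha S},e^{2\alpha S}]$. This gives a TV bound of order $e^{2\alpha S}-1$, which exceeds $\alpha S$. Handling the normalization with a clean linear constant is the main obstacle. I would instead use a homotopy argument. Define the path $b_k(s)\propto b^{(0)}_k e^{s\Delta_k}$ for $s\in[0,1]$. A direct computation gives
\begin{equation*}
\frac{d}{ds}b_k(s)=b_k(s)\bigl(\Delta_k-\mathbb{E}_{b(s)}[\Delta]\bigr).
\end{equation*}
Therefore
\begin{equation*}
D_{TV}\bigl(\mathbf{b}^{(0)},\mathbf{b}^{(1)}\bigr)\le \int_0^1 \tfrac12\sum_k b_k(s)\,\bigl|\Delta_k-\mathbb{E}_{b(s)}\Delta\bigr|\,ds .
\end{equation*}
For a random variable supported in $[m,M]$, the mean absolute deviation is at most $(M-m)/2$. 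So the integrand is at most $\tfrac12\cdot\tfrac{2\alpha_{i^*}S}{2}$, which yields $D_{TV}\le \tfrac12\alpha_{i^*}S\le \alpha_{i^*}\sum_f\mathcal{E}_f$. This proves Eq.~\ref{eq:anchor_bound} with a factor of $2$ to spare. Alternatively, Pinsker's inequality with a bound on $D_{\mathrm{KL}}$ could serve as a fallback route.

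Finally, for the asymptotic clause, write $x_j=(E_j c_j)^\gamma$. Eq.~\ref{eq:posterior_trust} gives
\begin{equation*}
\alpha_{i^*}=\frac{\sum_{j\ne i^*}x_j}{\sum_j x_j}.
\end{equation*}
This tends to $0$ exactly when $x_{i^*}$ dominates the total. Combined with the $T$-independent bound, the right-hand side of Eq.~\ref{eq:anchor_bound} vanishes uniformly in $T$, provided the $\mathcal{E}_f$ stay bounded. I would state that boundedness explicitly as an assumption, for instance messages floored away from $0$.
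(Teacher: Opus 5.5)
Your argument is correct and follows the same route as the paper. The non-cumulative update bounds the anchor's log-shift at iteration $T$ in $L_\infty$ by $\alpha_{i^*}\sum_{f}\mathcal{E}_f$, and a Lipschitz estimate for softmax then converts this into a total-variation bound. The one difference is that the paper simply asserts an $L_\infty\to L_1$ Lipschitz constant of $2$ for softmax, whereas your homotopy and mean-absolute-deviation argument proves that step and gains a factor of $2$, giving $D_{TV}\le \tfrac12\alpha_{i^*}\sum_f\mathcal{E}_f$; your explicit assumptions (strictly positive messages, $\mathcal{E}_f$ taken as a supremum over iterations) also make the $T$-independence precise where the paper leaves it implicit.
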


%─────────────────────────────────────────────────────────────────────────
\subsection{Relational Calibration and Output Fusion}\label{sec:fusion}
% ─────────────────────────────────────────────────────────────────────────

The final ensemble prediction is obtained by synthesizing the post-BP beliefs through a weighted combination. Unlike static weighting schemes, this mechanism derives model influence dynamically, conditioned on the observed inter-model prediction patterns and the specific domain context. The final ensemble belief $\hat{\mathbf{b}}$ for each candidate answer $a_k$ is computed through a relationally calibrated fusion of the posterior beliefs $b_i^{\text{out}}$:
\begin{equation}
\label{eq:final_aggregation}
\hat{\mathbf{b}}(a_k) = \frac{\sum_{i=1}^{N} w_i \cdot b_i^{\text{out}}(a_k)}{\sum_{i=1}^{N} w_i}, \quad \hat{y} = \arg\max_{a_k} \hat{\mathbf{b}}(a_k)
\end{equation}
The aggregation weight $w_i$ for model $M_i$ is determined by the joint calibration score $\Phi_i$, scaled by the concentration hyperparameter $\gamma$:
\begin{equation}
\label{eq:relational_weight}
w_i = \Phi_i^{\gamma}, \quad \text{where} \quad \Phi_i = \text{acc}_i^{(d,r)} \cdot \bar{R}_i \cdot \text{cal}_i(c_i) \cdot \text{acc}_i^{(d)}
\end{equation}

The historical expertise and real-time reliability of each model are integrated The historical expertise and real-time reliability of each model are integrated into $\Phi_i$ through several key components, including the model's baseline domain accuracy $\text{acc}_i^{(d)}$ and its specialized historical performance $\text{acc}_i^{(d,r)}$ within specific epistemic regimes. To account for real-time model behavior, the framework incorporates a calibration score $\text{cal}_i(c_i)$ that maps current confidence to expected accuracy, alongside the mean relational weight $\bar{R}_i = \frac{1}{N-1}\sum_{j \neq i} P(M_i \text{ correct} \mid \pi_{ij}, d)$. This final term quantifies the model's credibility by aggregating the strength of agreement or dissent from its neighboring agents within the factor graph.

% ─────────────────────────────────────────────────────────────────────────
\section{Experiment}
\label{sec:Experiment}
% ─────────────────────────────────────────────────────────────────────────
To evaluate the efficacy and efficiency of our proposed framework, we conduct experiments across four high-stakes reasoning benchmarks. Our analysis examines whether semantic-layer fusion improves predictive accuracy, whether adaptive damping mitigates anchor corruption in heterogeneous ensembles, and whether logit-level collaboration reduces token consumption and latency. Compared with traditional text-based multi-agent systems, our logit-level belief propagation maintains strong performance while substantially reducing the computational footprint.

% ─────────────────────────────────────────────────────────────────────────
\subsection{Experimental Setup}
\label{sec:Setup}
% ─────────────────────────────────────────────────────────────────────────

We evaluate our framework across four diverse, reasoning-intensive benchmarks, including MMLU (57 subjects)~\cite{hendrycks2020measuring}, MMLU-Pro (10-choice format)~\cite{wang2024mmlu}, GPQA (graduate-level science)~\cite{rein2023gpqa}, and MedMCQA (specialized medical knowledge)~\cite{pal2022medmcqa}. A comprehensive breakdown of these datasets is provided in Appendix~\ref{app:Dataset_details}. To ensure a fair and rigorous comparison with state-of-the-art graph-based ensembling, we adopt the experimental configuration reported in the GoA framework~\cite{yun2026graph}, utilizing a standardized ensemble of 7-8B parameter models. This expert ecosystem includes Qwen2.5-7B-Instruct (General)~\cite{qwen2024}, Qwen2.5-Coder-7B-Instruct (Code)~\cite{hui2024qwen2}, Mathstral-7B-v0.1 (Math)~\cite{mathstral2024}, Bio-Medical-Llama-3-8B (Biomedical)~\cite{biomedicalllama2024}, finance-Llama3-8B (Finance)~\cite{cheng2024instruction}, and Saul-7B-Instruct-v1 (Legal)~\cite{colombo2024saullm}. Crucially, our framework does not rely on a fixed anchor; instead, the anchor model $M_{i^*}$ is identified adaptively for each query by maximizing a weighted combination of domain-conditional and overall historical accuracy. We compare our approach against state-of-the-art baselines, including single-agent experts and multi-agent generative frameworks such as Debate~\cite{du2024improving}, Self-Consistency (SC)~\cite{wang2022self}, Refine~\cite{madaan2023self}, ReConcile~\cite{chen2024reconcile}, MoA~\cite{wang2406mixture}, Self-MoA~\cite{li2025rethinking}, GOA\_Max~\cite{yun2026graph}, and GOA\_mean~\cite{yun2026graph}. All experiments were conducted using NVIDIA A100 GPUs for LLM inference, while the proposed belief propagation aggregation runs on CPU with an overhead below 0.2 ms per query.

% ─────────────────────────────────────────────────────────────────────────
\subsection{Comparative Performance}
% ─────────────────────────────────────────────────────────────────────────

Primary accuracy results across all benchmarks are reported in Table~\ref{tab:overall_results}. Our method establishes a new state-of-the-art on all four datasets, achieving 79.80\% on MMLU, 55.36\% on MMLU-Pro, 43.64\% on GPQA, and 62.83\% on MedMCQA. These gains are particularly pronounced on high-difficulty benchmarks, where our approach delivers a 10.81\% absolute improvement on GPQA, highlighting its effectiveness in resolving complex reasoning disagreements. These results provide strong empirical evidence that our method mitigates the well-known anchor corruption effect in heterogeneous ensembles. Naive aggregation strategies, such as majority voting, often degrade strong models due to the influence of weaker agents, as reflected by the drop to 73.51\% on MMLU. In contrast, our adaptive asymmetric damping mechanism preserves and selectively amplifies reliable signals, enabling consistent improvements even at the individual model level, as seen in the gains from Anchor model (Initial Belief) to Anchor model (Refined Belief).

\begin{table}[h]
\caption{Accuracy comparison across four reasoning benchmarks. Bold values indicate the best performance in each dataset. Our relational aggregation framework achieves the highest accuracy on all benchmarks, outperforming both single-agent experts and leading text-based multi-agent baselines.}
\label{tab:overall_results}
\centering
\small
\setlength{\tabcolsep}{4pt}
\renewcommand{\arraystretch}{0.80}
\begin{tabular}{llcccc}
\toprule
Category & Method & MMLU & MMLU-Pro & GPQA & MedMCQA \\
\midrule

\multirow{6}{*}{Single-Agent}
& General \cite{qwen2024}                 & 77.61 & 53.90 & 32.83 & 55.22 \\
& Code \cite{hui2024qwen2}                & 68.04 & 42.33 & 33.84 & 45.57 \\
& Math \cite{mathstral2024}               & 63.47 & 37.19 & 30.81 & 45.25 \\
& Biomedical \cite{biomedicalllama2024}   & 46.60 & 27.90 & 25.25 & 47.00 \\
& Finance \cite{cheng2024instruction}     & 54.11 & 25.52 & 28.28 & 42.08 \\
& Legal \cite{colombo2024saullm}          & 55.86 & 27.57 & 30.30 & 41.50 \\

\midrule
\multirow{8}{*}{Multi-Agent}
& Debate \cite{du2024improving}            & 72.53 & 47.05 & 29.29 & 53.05 \\
& SC \cite{wang2022self}                   & 77.97 & 54.12 & 36.36 & 55.70 \\
& Refine \cite{madaan2023self}             & 77.40 & 54.71 & 38.92 & 54.94 \\
& ReConcile \cite{chen2024reconcile}       & 69.61 & 44.19 & 34.34 & 54.60 \\
& MoA \cite{wang2406mixture}               & 75.71 & 53.33 & 32.83 & 54.94 \\
& Self-MoA \cite{li2025rethinking}         & 78.14 & 54.19 & 33.84 & 55.56 \\
& GoA$_{\text{Max}}$ \cite{yun2026graph}   & 79.18 & 54.78 & 39.98 & 60.04 \\
& GoA$_{\text{Mean}}$ \cite{yun2026graph}  & 78.52 & 54.27 & 40.54 & 57.92 \\

\midrule
\multirow{5}{*}{Ours}
& Anchor (Initial Belief) & 77.31 & 52.30 & 40.61 & 59.23 \\
& Anchor (Refined Belief) & 78.43 & 53.21 & 42.14 & 61.12 \\
& Majority (Initial Belief) & 73.51 & 51.02 & 37.26 & 57.47 \\
& Majority (Refined Belief) & 76.22 & 52.37 & 39.41 & 58.87 \\
& \textbf{Relational Aggregation (Ours)} & \textbf{79.80} & \textbf{55.36} & \textbf{43.64} & \textbf{62.83} \\

\bottomrule
\end{tabular}
\end{table}

% ─────────────────────────────────────────────────────────────────────────
\subsection{Efficiency and Scalability}
\label{sec:efficiency}
% ─────────────────────────────────────────────────────────────────────────

A critical advantage of our architecture is the decoupling of consensus from generative latency. As shown in Table~\ref{tab:efficiency}, we evaluate our framework against the leading ensemble baselines, MoA and GoA, across the MMLU-Pro and GPQA benchmarks. While these existing systems require between 11 and 19 sequential API calls and up to 245 seconds to resolve a single query, our method achieves superior accuracy with only 3 parallel API calls and under 0.2 milliseconds of BP overhead. The observed 97\% reduction in token consumption is a direct result of our semantic-layer approach; by eliminating the need for models to iteratively critique and refine each other's reasoning via text, we shift the ensembling bottleneck from GPU-intensive text generation to a negligible, CPU-bound mathematical operation. This effectively removes the multiplicative overhead inherent in multi-turn generative systems. Our analysis shows that accuracy remains robust as the ensemble size grows to $N=10$. While text-based methods can degrade when weaker agents are added, as seen in MoA dropping to 31.82\%, our framework reaches 41.82\% by resolving conflicts at the logit level. This semantic-layer aggregation distills collective signal while avoiding the logical interference of iterative text exchange. Detailed scalability results are provided in Table~\ref{tab:gpqa_10agents} of Appendix~\ref{app:Scalability}.

\begin{table}[h]
\caption{Efficiency comparison on MMLU-Pro and GPQA. Our method achieves the highest accuracy while dramatically reducing computational cost.}
% , requiring only 3 API calls and sub-millisecond belief propagation overhead.} 
\label{tab:efficiency}
\renewcommand{\arraystretch}{0.80}
\centering
\small
\begin{tabular}{llcccc}
\toprule
Dataset & Method & Accuracy & Calls & Tokens(k) & Time(s) \\
\midrule
\multirow{3}{*}{MMLU-Pro}
& MoA \cite{wang2406mixture}     & 53.33 & 19 & 56.05 & 240.26 \\
& GoA-Max \cite{yun2026graph}    & 54.78 & 11 & 19.18 & 100.43 \\
& \textbf{Ours}                  & \textbf{55.36} & \textbf{3} & \textbf{0.357} & \textbf{0.15} \\
\midrule
\multirow{3}{*}{GPQA}
& MoA \cite{wang2406mixture}     & 32.83 & 19 & 56.87 & 245.65 \\
& GoA-Max \cite{yun2026graph}    & 39.98 & 11 & 17.32 & 88.15 \\
& \textbf{Ours}                  & \textbf{43.64} & \textbf{3} & \textbf{0.663} & \textbf{0.20} \\
\bottomrule
\end{tabular}
\end{table}

% Our analysis shows that accuracy remains robust as the ensemble size grows to $N=10$. Traditional text-based frameworks often suffer a performance collapse when adding weaker agents, as seen in MoA dropping to 31.82\%. This occurs because flawed natural language arguments from weaker models interfere with the reasoning of stronger agents. In contrast, our framework reaches 41.82\% by resolving conflicts at the logit level. By operating at the semantic layer, we distill the collective signal and avoid the logical interference typical of iterative text exchange. We provide a detailed breakdown of these scalability results across all baselines in Table~\ref{tab:gpqa_10agents} of the Appendix~\ref{app:Scalability}.

% ─────────────────────────────────────────────────────────────────────────
\subsection{Ablation Study}
% ─────────────────────────────────────────────────────────────────────────
We evaluate the individual contribution of each check node across all benchmarks. As shown in Table~\ref{tab:ablation_multi_dataset}, each relational constraint provides measurable gains over majority-based aggregation, while the integrated All (Ours) configuration achieves the highest overall performance. Among individual components, the margin-based constraint consistently yields the strongest improvements across most benchmarks, while coset-relational and calibration-based constraints are particularly effective in specific domains. Combining them leads to the best performance, confirming that diverse relational signals are essential for improving ensemble accuracy and mitigating anchor corruption.

\begin{table}[h]
\caption{Ablation study across four benchmarks. Individual check nodes improve over majority-based aggregation, while combining all relational constraints achieves the best overall performance.}
\label{tab:ablation_multi_dataset}
\centering
\small
\begin{tabular}{lcccc}
\hline
\textbf{Method} & \textbf{MMLU} & \textbf{MMLU-Pro} & \textbf{GPQA} & \textbf{MedMCQA} \\
\hline
Majority (Initial Belief) & 73.5 & 51 & 37.26 & 57.47 \\
Majority (Refined Belief) & 76.2 & 52.37 & 39.41 & 58.87 \\
\hline
margin only & 79.51 & 54.90 & 43.30 & 62.3 \\
calibration only & 79.17 & 54.75 & 43.10 & 62.21 \\
dissent only & 78.82 & 54.45 & 42.87 & 61.95 \\
agreement only & 79.41 & 54.40 & 42.65 & 61.85 \\
domain only & 78.91 & 54.85 & 42.9 & 62.2 \\
coset\_relational only & 79.24 & 54.95 & 42.75 & 62.05 \\
\hline
\textbf{All (Ours)} & \textbf{79.71} & \textbf{55.50} & \textbf{43.64} & \textbf{62.83} \\
\hline
\end{tabular}
\end{table}

% ─────────────────────────────────────────────────────────────────────────
\section{Conclusion}
\label{sec:Conclusion}
% ─────────────────────────────────────────────────────────────────────────

We presented a semantic-layer fusion framework that resolves the fundamental trade-off between ensemble accuracy and computational overhead. By shifting multi-agent coordination from natural language dialogue to logit-level belief propagation, our method effectively eliminates anchor corruption and achieves a 97\% reduction in token consumption. Experimental results across four reasoning benchmarks demonstrate that our approach not only establishes a new state-of-the-art but also remains robust as ensemble size increases. While our framework demonstrates strong performance and efficiency, its evaluation is restricted to multiple-choice benchmarks, where model outputs can be directly represented as probability distributions over discrete candidates. Extending the framework to open-ended generation or free-form reasoning would require additional mechanisms to align semantic outputs.

\bibliographystyle{unsrt}
\bibliography{main}

@article{yun2026graph,
  title={Graph-of-Agents: A Graph-based Framework for Multi-Agent LLM Collaboration},
  author={Yun, Sukwon and Peng, Jie and Li, Pingzhi and Fan, Wendong and Chen, Jie and Zou, James and Li, Guohao and Chen, Tianlong},
  journal={arXiv preprint arXiv:2604.17148},
  year={2026}
}

@article{wang2406mixture,
  title={Mixture-of-agents enhances large language model capabilities, 2024},
  author={Wang, Junlin and Wang, Jue and Athiwaratkun, Ben and Zhang, Ce and Zou, James},
  journal={URL https://arxiv. org/abs/2406.04692},
  volume={1},
  number={2},
  pages={5}
}

@misc{qwen2024,
  title={{Qwen2.5}: A Party of Foundation Models},
  author={{Qwen Team}},
  year={2024},
  howpublished={\url{https://qwen.ai/blog?id=qwen2.5}},
  note={September 2024}
}

@article{hui2024qwen2,
  title={Qwen2. 5-coder technical report},
  author={Hui, Binyuan and Yang, Jian and Cui, Zeyu and Yang, Jiaxi and Liu, Dayiheng and Zhang, Lei and Liu, Tianyu and Zhang, Jiajun and Yu, Bowen and Lu, Keming and others},
  journal={arXiv preprint arXiv:2409.12186},
  year={2024}
}

@misc{mathstral2024,
  title={{Math$\Sigma$tral}},
  author={{Mistral AI}},
  year={2024},
  howpublished={\url{https://mistral.ai/news/mathstral/}},
  note={July 2024}
}

@misc{biomedicalllama2024,
  title={{Bio-Medical-Llama-3-8B}: A High-Performance Biomedical Language Model},
  author={{ContactDoctor}},
  year={2024},
  howpublished={Online},
  url={https://huggingface.co/ContactDoctor/Bio-Medical-Llama-3-8B}
}

@inproceedings{cheng2024instruction,
  title={Instruction pre-training: Language models are supervised multitask learners},
  author={Cheng, Daixuan and Gu, Yuxian and Huang, Shaohan and Bi, Junyu and Huang, Minlie and Wei, Furu},
  booktitle={Proceedings of the 2024 Conference on Empirical Methods in Natural Language Processing},
  pages={2529--2550},
  year={2024}
}

@article{colombo2024saullm,
  title={Saullm-7b: A pioneering large language model for law},
  author={Colombo, Pierre and Pires, Telmo Pessoa and Boudiaf, Malik and Culver, Dominic and Melo, Rui and Corro, Caio and Martins, Andre FT and Esposito, Fabrizio and Raposo, Vera L{\'u}cia and Morgado, Sofia and others},
  journal={arXiv preprint arXiv:2403.03883},
  year={2024}
}

@inproceedings{du2024improving,
  title={Improving factuality and reasoning in language models through multiagent debate},
  author={Du, Yilun and Li, Shuang and Torralba, Antonio and Tenenbaum, Joshua B and Mordatch, Igor},
  booktitle={Forty-first international conference on machine learning},
  year={2024}
}

@article{wang2022self,
  title={Self-consistency improves chain of thought reasoning in language models},
  author={Wang, Xuezhi and Wei, Jason and Schuurmans, Dale and Le, Quoc and Chi, Ed and Narang, Sharan and Chowdhery, Aakanksha and Zhou, Denny},
  journal={arXiv preprint arXiv:2203.11171},
  year={2022}
}

@article{madaan2023self,
  title={Self-refine: Iterative refinement with self-feedback},
  author={Madaan, Aman and Tandon, Niket and Gupta, Prakhar and Hallinan, Skyler and Gao, Luyu and Wiegreffe, Sarah and Alon, Uri and Dziri, Nouha and Prabhumoye, Shrimai and Yang, Yiming and others},
  journal={Advances in neural information processing systems},
  volume={36},
  pages={46534--46594},
  year={2023}
}

@inproceedings{chen2024reconcile,
  title={Reconcile: Round-table conference improves reasoning via consensus among diverse llms},
  author={Chen, Justin and Saha, Swarnadeep and Bansal, Mohit},
  booktitle={Proceedings of the 62nd Annual Meeting of the Association for Computational Linguistics (Volume 1: Long Papers)},
  pages={7066--7085},
  year={2024}
}

@article{li2025rethinking,
  title={Rethinking mixture-of-agents: Is mixing different large language models beneficial?},
  author={Li, Wenzhe and Lin, Yong and Xia, Mengzhou and Jin, Chi},
  journal={arXiv preprint arXiv:2502.00674},
  year={2025}
}

@inproceedings{jiang2023llm,
  title={Llm-blender: Ensembling large language models with pairwise ranking and generative fusion},
  author={Jiang, Dongfu and Ren, Xiang and Lin, Bill Yuchen},
  booktitle={Proceedings of the 61st Annual Meeting of the Association for Computational Linguistics (Volume 1: Long Papers)},
  pages={14165--14178},
  year={2023}
}

@inproceedings{liu2024dynamic,
  title={A dynamic LLM-powered agent network for task-oriented agent collaboration},
  author={Liu, Zijun and Zhang, Yanzhe and Li, Peng and Liu, Yang and Yang, Diyi},
  booktitle={First Conference on Language Modeling},
  year={2024}
}

@inproceedings{liang2024encouraging,
  title={Encouraging divergent thinking in large language models through multi-agent debate},
  author={Liang, Tian and He, Zhiwei and Jiao, Wenxiang and Wang, Xing and Wang, Yan and Wang, Rui and Yang, Yujiu and Shi, Shuming and Tu, Zhaopeng},
  booktitle={Proceedings of the 2024 conference on empirical methods in natural language processing},
  pages={17889--17904},
  year={2024}
}

@inproceedings{verma2026selene,
  title={SELENE: Selective and Evidence-Weighted LLM Debating for Efficient and Reliable Reasoning},
  author={Verma, Akshay and Gupta, Swapnil and Gupta, Deepak and Sircar, Prateek and Pillai, Siddharth},
  booktitle={Proceedings of the 19th Conference of the European Chapter of the Association for Computational Linguistics (Volume 5: Industry Track)},
  pages={95--104},
  year={2026}
}

@article{chan2023chateval,
  title={Chateval: Towards better llm-based evaluators through multi-agent debate},
  author={Chan, Chi-Min and Chen, Weize and Su, Yusheng and Yu, Jianxuan and Xue, Wei and Zhang, Shanghang and Fu, Jie and Liu, Zhiyuan},
  journal={arXiv preprint arXiv:2308.07201},
  year={2023}
}

@article{guha2024smoothie,
  title={Smoothie: Label free language model routing},
  author={Guha, Neel and Chen, Mayee F and Chow, Trevor and Khare, Ishan S and Re, Christopher},
  journal={Advances in Neural Information Processing Systems},
  volume={37},
  pages={127645--127672},
  year={2024}
}

@article{xiong2026token,
  title={Token-Level LLM Collaboration via FusionRoute},
  author={Xiong, Nuoya and Zhou, Yuhang and Zeng, Hanqing and Chen, Zhaorun and Huang, Furong and Bi, Shuchao and Zhang, Lizhu and Zhao, Zhuokai},
  journal={arXiv preprint arXiv:2601.05106},
  year={2026}
}

@article{zou2025latent,
  title={Latent collaboration in multi-agent systems},
  author={Zou, Jiaru and Yang, Xiyuan and Qiu, Ruizhong and Li, Gaotang and Tieu, Katherine and Lu, Pan and Shen, Ke and Tong, Hanghang and Choi, Yejin and He, Jingrui and others},
  journal={arXiv preprint arXiv:2511.20639},
  year={2025}
}

@article{hendrycks2020measuring,
  title={Measuring massive multitask language understanding},
  author={Hendrycks, Dan and Burns, Collin and Basart, Steven and Zou, Andy and Mazeika, Mantas and Song, Dawn and Steinhardt, Jacob},
  journal={arXiv preprint arXiv:2009.03300},
  year={2020}
}

@article{wang2024mmlu,
  title={Mmlu-pro: A more robust and challenging multi-task language understanding benchmark},
  author={Wang, Yubo and Ma, Xueguang and Zhang, Ge and Ni, Yuansheng and Chandra, Abhranil and Guo, Shiguang and Ren, Weiming and Arulraj, Aaran and He, Xuan and Jiang, Ziyan and others},
  journal={Advances in Neural Information Processing Systems},
  volume={37},
  pages={95266--95290},
  year={2024}
}

@article{rein2023gpqa,
  title={Gpqa: A graduate-level google-proof q\&a benchmark},
  author={Rein, David and Hou, Betty Li and Stickland, Asa Cooper and Petty, Jackson and Pang, Richard Yuanzhe and Dirani, Julien and Michael, Julian and Bowman, Samuel R},
  journal={arXiv preprint arXiv:2311.12022},
  year={2023}
}

@inproceedings{pal2022medmcqa,
  title={Medmcqa: A large-scale multi-subject multi-choice dataset for medical domain question answering},
  author={Pal, Ankit and Umapathi, Logesh Kumar and Sankarasubbu, Malaikannan},
  booktitle={Conference on health, inference, and learning},
  pages={248--260},
  year={2022},
  organization={PMLR}
}

@ARTICLE{7529226,
  author={Durisi, Giuseppe and Koch, Tobias and Popovski, Petar},
  journal={Proceedings of the IEEE}, 
  title={Toward Massive, Ultrareliable, and Low-Latency Wireless Communication With Short Packets}, 
  year={2016},
  volume={104},
  number={9},
  pages={1711-1726},
  doi={10.1109/JPROC.2016.2537298}}

@inproceedings{zeng2026harnessing,
  title={Harnessing consistency for robust test-time LLM ensemble},
  author={Zeng, Zhichen and Yu, Qi and Lin, Xiao and Qiu, Ruizhong and Ning, Xuying and Wei, Tianxin and Yan, Yuchen and He, Jingrui and Tong, Hanghang},
  booktitle={Findings of the Association for Computational Linguistics: EACL 2026},
  pages={3528--3545},
  year={2026}
}

@article{singhal2025toward,
  title={Toward expert-level medical question answering with large language models},
  author={Singhal, Karan and Tu, Tao and Gottweis, Juraj and Sayres, Rory and Wulczyn, Ellery and Amin, Mohamed and Hou, Le and Clark, Kevin and Pfohl, Stephen R and Cole-Lewis, Heather and others},
  journal={Nature medicine},
  volume={31},
  number={3},
  pages={943--950},
  year={2025},
  publisher={Nature Publishing Group US New York}
}

@inproceedings{gema2025we,
  title={Are we done with mmlu?},
  author={Gema, Aryo Pradipta and Leang, Joshua Ong Jun and Hong, Giwon and Devoto, Alessio and Mancino, Alberto Carlo Maria and Saxena, Rohit and He, Xuanli and Zhao, Yu and Du, Xiaotang and Madani, Mohammad Reza Ghasemi and others},
  booktitle={Proceedings of the 2025 Conference of the Nations of the Americas Chapter of the Association for Computational Linguistics: Human Language Technologies (Volume 1: Long Papers)},
  pages={5069--5096},
  year={2025}
}

@article{boroujeni2026don,
  title={Don't Waste Bits! Adaptive KV-Cache Quantization for Lightweight On-Device LLMs},
  author={Boroujeni, Sayed Pedram Haeri and Mehrabi, Niloufar and Woods, Patrick and Hillesheim, Gabriel and Razi, Abolfazl},
  journal={arXiv preprint arXiv:2604.04722},
  year={2026}
}

%%%%%%%%%%%%%%%%%%%%%%%%%%%%%%%%%%%%%%%%%%%%%%%%%%%%%%%%%%%%
\clearpage
\appendix
\section{Appendix}
\label{app:Appendix}

This appendix provides supplementary notation, proofs, theoretical foundations, and extended experimental results. We first summarize the key notation and parameter definitions in Appendix~\ref{app:notation}. We then present the formal derivation for the anchor corruption limit in Appendix~\ref{app:anchor_proof} and the stability analysis for our adaptive asymmetric damping mechanism in Appendix~\ref{app:anchor_stability}, followed by detailed formulations of the message-passing and check-node weighting mechanisms in Appendix~\ref{app:check_weights}. To bridge communication theory and multi-agent systems, Appendix~\ref{app:Background} provides a detailed review of the principles of belief propagation in factor graphs. Technical implementation details, including stratified sampling protocols for our benchmarks and the zero-shot prompt templates, are provided in Appendix~\ref{app:Dataset_details} and Appendix~\ref{app:prompts}, respectively. Extended experimental results, including scalability analysis, domain-specific performance, and entropy convergence dynamics, are provided in Appendix~\ref{app:Extended_analysis}. Finally, a discussion on the environmental sustainability and ethical considerations of our framework is presented in Appendix~\ref{app:Broader_impact}.

% ═══════════════════════════════════════════════════════════════
\section{Notation and Parameter Definitions}
\label{app:notation}
% ═══════════════════════════════════════════════════════════════

For clarity, we summarize the key symbols and parameters used throughout the paper in Table~\ref{tab:notation}.

\begin{table}[h]
\centering
\small
\caption{Summary of key symbols and parameters used in the framework.}
\label{tab:notation}
\begin{tabular}{ll}
\toprule
\textbf{Symbol} & \textbf{Definition} \\
\midrule
$N$ & Number of LLMs in the ensemble \\
$K$ & Number of candidate answers \\
$q$ & Input multiple-choice question \\
$\mathcal{A}=\{a_1,\ldots,a_K\}$ & Candidate answer set \\
$a^*$ & Ground-truth answer \\
$M_i$ & $i$-th language model in the ensemble \\
$M_{i^*}$ & Anchor model with highest reliability/evidence \\
$\ell_i(a_k)$ & Logit assigned by model $M_i$ to answer $a_k$ \\
$\mathbf{b}_i$  & Belief distribution of model $M_i$ over candidates \\
$b_i(a_k)$ & Probability assigned by $M_i$ to candidate $a_k$ \\
$\Delta^{K-1}$ & Probability simplex over $K$ candidates \\
$\mathcal{G}(\cdot)$ & Aggregation operator over model beliefs \\
$b^*$ & Aggregated belief distribution \\
$p_{i^*}$ & Accuracy of the anchor model \\
$p_w$ & Accuracy of weaker models in the anchor corruption analysis \\
$p_{\mathrm{ens}}$ & Accuracy of the ensemble under majority voting \\
$\sigma(q)$ & Epistemic signature of question $q$ \\
$H_i$ & Shannon entropy of $\mathbf{b}_i$ \\
$\bar{H}_i$ & Normalized Shannon entropy of $\mathbf{b}_i$ \\
$c_i$ & Top-1 confidence of model $M_i$ \\
$m_i$ & Decision margin between the top two probabilities \\
$\tau_i$ & Distributional type: \textsc{peaked}, \textsc{spread}, or \textsc{bimodal} \\
$D_{ij}$ & Symmetric KL divergence between $\mathbf{b}_i$ and $\mathbf{b}_j$ \\
$s$ & Top-1 agreement fraction across models \\
$\theta_c,\theta_m,\theta_H,\theta_b$ & Thresholds for confidence, margin, entropy, and secondary mass \\
$\psi_{\text{prof}}$ & Distributional profiling check node \\
$\psi_{\text{conf}}$ & Calibrated confidence check node \\
$\psi_{\text{marg}}$ & Margin decisiveness check node \\
$\psi_{\text{diss}}$ & Dissent check node \\
$\psi_{\text{dom}}$ & Domain expertise check node \\
$\psi_{\text{dom-rel}}$ & Domain-relational check node \\
$\mathcal{F}(i)$ & Set of check nodes connected to model $M_i$ \\
$\mathbf{m}_{f \to i}^{(t)}$ & Message from check node $f$ to model $M_i$ at iteration $t$ \\
$\alpha_i$ & Adaptive asymmetric damping factor for model $M_i$ \\
$E_i$ & Epistemic evidence score of model $M_i$ \\
$\gamma$ & Sensitivity/sharpness hyperparameter \\
$\text{acc}_i^{(d)}$ & Domain-conditional accuracy of model $M_i$ \\
$R_{ji}^{(d)}$ & Pairwise relational accuracy of model $j$ relative to model $i$ in domain $d$ \\
$\hat{\mathbf{b}}$ & Final aggregated ensemble belief \\
$\hat{y}$ & Final predicted answer \\
$w_i$ & Final aggregation weight for model $M_i$ \\
$\Phi_i$ & Joint calibration score for model $M_i$ \\
$b_i^{\text{out}}$ & Posterior belief of model $M_i$ after BP refinement \\
$\text{acc}_i^{(d,r)}$ & Accuracy of model $M_i$ under domain $d$ and epistemic regime $r$ \\
$\bar{R}_i$ & Mean relational reliability weight for model $M_i$ \\
$\text{cal}_i(c_i)$ & Calibration score of model $M_i$ at confidence $c_i$ \\
$d$ & Domain assigned to the query \\
$r$ & Epistemic regime assigned to the query \\
\bottomrule
\end{tabular}
\end{table}

% ═══════════════════════════════════════════════════════════════
\section{Proofs and Derivations}
% ═══════════════════════════════════════════════════════════════

% ═══════════════════════════════════════════════════════════════
\subsection{Proof of Proposition \ref{prop:anchor_corruption} (Anchor Corruption)}
\label{app:anchor_proof}
% ═══════════════════════════════════════════════════════════════

For clarity and completeness, we restate the proposition before providing the proof.

\textbf{Proposition 1 (Anchor Corruption Limit).} \textit{Let $M_{i^*}$ be an anchor model with accuracy $p_{i^*} > 0.5$, and let $M_1, M_2$ be two weaker models with equal accuracy $p_w < p_{i^*}$. Under a majority voting scheme with conditionally independent errors, the ensemble accuracy $p_{\mathrm{ens}}$ is strictly less than $p_{i^*}$ if:}
\begin{equation}
p_w < \frac{p_{i^*} - \sqrt{p_{i^*}(1-p_{i^*})}}{2p_{i^*} - 1}
\end{equation}

\begin{proof}
Under majority voting with $N=3$, the ensemble requires at least two correct predictions to yield the correct final answer. We condition the ensemble's success on the outcome of the anchor model $M_{i^*}$:

\emph{Case 1 ($M_{i^*}$ is correct, probability $p_{i^*}$)}: The ensemble succeeds if and only if at least one of the weaker models, $M_1$ or $M_2$, is also correct. This occurs with probability $1-(1-p_1)(1-p_2)$.

\emph{Case 2 ($M_{i^*}$ is wrong, probability $1-p_{i^*}$)}: The ensemble succeeds if and only if both $M_1$ and $M_2$ independently select the correct answer. This occurs with probability $p_1 p_2$. Substituting the equal accuracy of the weaker models ($p_1 = p_2 = p_w$), the total expected accuracy of the ensemble is:
\begin{equation}
p_{\mathrm{ens}} = p_{i^*}[1-(1-p_w)^2] + (1-p_{i^*})p_w^2
\end{equation}

To find the threshold where the ensemble actively degrades the anchor's baseline performance, we enforce the condition $p_{\mathrm{ens}} < p_{i^*}$. Expanding the terms yields:
\begin{equation}
\begin{aligned} 
p_{i^*}(2p_w - p_w^2) + (1-p_{i^*})p_w^2 &< p_{i^*} \\ 
p_w^2(1 - 2p_{i^*}) + 2p_{i^*}p_w - p_{i^*} &< 0 
\end{aligned}
\end{equation}

Because we assume a strong anchor model ($p_{i^*} > 0.5$), it follows that $1-2p_{i^*} < 0$. Reversing the inequality to make the leading coefficient positive gives:
\begin{equation}
(2p_{i^*} - 1)p_w^2 - 2p_{i^*}p_w + p_{i^*} > 0
\end{equation}

The roots of this quadratic equation determine the performance boundaries. Applying the quadratic formula and solving for the lower root provides the degradation threshold:
\begin{equation}
p_w < \frac{p_{i^*} - \sqrt{p_{i^*}^2 - p_{i^*}(2p_{i^*} - 1)}}{2p_{i^*} - 1} = \frac{p_{i^*} - \sqrt{p_{i^*}(1-p_{i^*})}}{2p_{i^*} - 1}
\end{equation}

Thus, if the accuracy of the weaker models falls below this threshold, naive ensembling strictly degrades the performance of the anchor model.
\end{proof}

% ═══════════════════════════════════════════════════════════════
\subsection{Stability Analysis of Adaptive Asymmetric Damping \ref{thm:anchor_protection}}
\label{app:anchor_stability}
% ═══════════════════════════════════════════════════════════════
\begin{proof}
By the update rule defined in Eq.~\ref{eq:bp_update}, the log-domain shift for the anchor model $i^*$ at any iteration $T$ is given by:
\begin{equation}
\label{eq:anchor_shift}
\Delta^{(T)} = \alpha_{i^*} \sum_{f \in \mathcal{F}(i^*)} \log \mathbf{m}_{f \to i^*}^{(T-1)}
\end{equation}
Because the update is anchored to the initial belief $\mathbf{b}^{(0)}$ rather than iteratively compounding on the previous iteration's state, the shift does not accumulate across $T$. Defining the maximum epistemic evidence of check node $f$ as $\mathcal{E}_f = \max_k \left| \log \mathbf{m}_{f \to i^*}(a_k) \right|$, the $L_\infty$ norm of the log-shift is bounded by:
\begin{equation}
\label{eq:anchor_shift_bound}
\|\Delta^{(T)}\|_\infty \leq \alpha_{i^*} \sum_{f \in \mathcal{F}(i^*)} \mathcal{E}_f
\end{equation}
The total variation distance between the initial anchor distribution $\mathbf{b}_{i^*}^{(0)}$ and the updated distribution $\mathbf{b}_{i^*}^{(T)}$ can be expressed as:
\begin{equation}
D_{TV}(\mathbf{b}_{i^*}^{(0)}, \mathbf{b}_{i^*}^{(T)}) = \frac{1}{2} \|\sigma(\mathbf{z}^{(0)} + \Delta^{(T)}) - \sigma(\mathbf{z}^{(0)})\|_1
\end{equation}
where $\sigma(\cdot)$ is the softmax function mapping log-beliefs to the probability simplex. Since softmax has an $L_\infty \to L_1$ Lipschitz constant of 2, we obtain:
\begin{equation}
D_{TV}(\mathbf{b}_{i^*}^{(0)}, \mathbf{b}_{i^*}^{(T)}) \leq \frac{1}{2} \cdot 2 \cdot \|\Delta^{(T)}\|_\infty = \|\Delta^{(T)}\|_\infty
\end{equation}
Substituting the bound from Eq.~\ref{eq:anchor_shift_bound}:
\begin{equation}
D_{TV}(\mathbf{b}_{i^*}^{(0)}, \mathbf{b}_{i^*}^{(T)}) \leq \alpha_{i^*} \sum_{f \in \mathcal{F}(i^*)} \mathcal{E}_f
\end{equation}
This bound ensures that for a small damping factor $\alpha_{i^*}$, the anchor remains within a controlled radius of its original belief. The anchor is refined by, rather than replaced by, the ensemble's collective signal.
\end{proof}

% ═══════════════════════════════════════════════════════════════
\subsection{Check-Node Message Weights}
\label{app:check_weights}
% ═══════════════════════════════════════════════════════════════

In the proposed factor-graph formulation, each check node $f \in \mathcal{F}$ computes a message $\mathbf{m}_{f \to i}^{(t)} \in \Delta^{K-1}$ to a target variable node $v_i$ by aggregating the beliefs of its neighboring variable nodes. The aggregation follows a relational fusion scheme that excludes the target node itself:
\begin{equation}
\label{eq:check_general}
\mathbf{m}_{f \to i}^{(t)}(a_k) =
\frac{\sum_{j \in \mathcal{N}(f) \setminus i} w_j^{(f)} \cdot b_j^{(t-1)}(a_k)}
{\sum_{j \in \mathcal{N}(f) \setminus i} w_j^{(f)}}
\end{equation}
where $\mathcal{N}(f) \setminus i$ denotes the set of neighboring variable nodes connected to check node $f$, excluding $v_i$, and $w_j^{(f)}$ is the check-type-specific weight assigned to model $M_j$.

The weights $w_j^{(f)}$ encode different relational constraints designed to capture complementary aspects of model reliability, including confidence, calibration, domain expertise, and inter-model agreement. Each check node implements a distinct weighting function:

\begin{align}
\label{eq:check_weights}
\psi_{\text{marg}}:\quad 
w_j &= (m_j + \epsilon)\,\text{acc}(j)^{\gamma} \\
\psi_{\text{conf}}:\quad 
w_j &= \text{cal}(j, c_j)^{\gamma} \\
\psi_{\text{dom}}:\quad 
w_j &= \text{acc}(j, d)^{\gamma} \\
\psi_{\text{dom-rel}}:\quad 
w_j &= \bigl(\text{acc}(j,d)\,R_{ji}^{(d)}\bigr)^{\gamma} \\
\psi_{\text{diss}}:\quad 
w_j &= R_{ji}^{\gamma}\bigl(1+\text{acc}(j)+\text{acc}(j,r)\bigr),
\quad j \in \mathcal{M}_{\text{min}}
\end{align}

\begin{equation}
\label{eq:profile_weight}
\psi_{\text{prof}}:\quad
w_j =
\begin{cases}
(m_j + 0.1)\,\text{acc}(j)^{\gamma}, & \text{if } \tau_j = \textsc{peaked}, \\
(1 - \bar{H}_j)\,\text{acc}(j)^{\gamma}, & \text{if } \tau_j = \textsc{spread}, \\
0.5\,\text{acc}(j)^{\gamma}, & \text{if } \tau_j = \textsc{bimodal}.
\end{cases}
\end{equation}

Each weighting function emphasizes a different dimension of epistemic reliability:

\noindent
The margin node $\psi_{\text{marg}}$ prioritizes models with decisive predictions, while the calibrated confidence node $\psi_{\text{conf}}$ favors models whose confidence aligns with historical accuracy. The domain node $\psi_{\text{dom}}$ boosts models with strong performance in the query domain, and the domain-relational node $\psi_{\text{dom-rel}}$ combines domain expertise with pairwise agreement patterns. The dissent node $\psi_{\text{diss}}$ emphasizes reliable minority models, while the profiling node $\psi_{\text{prof}}$ adapts weights according to the distributional shape of each belief.

This modular design allows the framework to incorporate heterogeneous relational signals, enabling robust aggregation while mitigating anchor corruption.

% ═══════════════════════════════════════════════════════════════
\section{Background: From Channel Coding to LLM Ensembles}\label{app:Background}
% ═══════════════════════════════════════════════════════════════

We briefly review the principles of Belief Propagation (BP) on factor graphs, specifically within the context of Low-Density Parity-Check (LDPC) decoding, to establish the theoretical foundation for our multi-LLM architecture.

In classical coding theory, a codeword $\mathbf{x} \in \{0,1\}^n$ is transmitted through noisy channels, producing observations $\mathbf{y}$. The decoder operates on a bipartite factor graph $\mathcal{G} = (\mathcal{V}, \mathcal{C}, \mathcal{E})$ consisting of variable nodes $\mathcal{V}$ (representing the bits), check nodes $\mathcal{C}$ (representing the relational constraints), and connecting edges $\mathcal{E}$. The sum-product algorithm iteratively updates beliefs via message passing. 

Let $L_{v_i \to c_j}^{(t)}$ denote the message from variable node $v_i$ to check node $c_j$ at iteration $t$, formulated as a log-likelihood ratio (LLR). The fundamental update rules are:

\textbf{Variable-to-check update} (combining intrinsic and extrinsic evidence):
\begin{equation}\label{eq:v2c}
    L_{v_i \to c_j}^{(t)} = L_{\mathrm{ch},i} + \sum_{c_k \in \mathcal{N}(v_i) \setminus c_j} L_{c_k \to v_i}^{(t-1)}
\end{equation}

\textbf{Check-to-variable update} (enforcing parity constraints):
\begin{equation}\label{eq:c2v}
    L_{c_j \to v_i}^{(t)} = 2\tanh^{-1}\!\Bigl(\prod_{v_k \in \mathcal{N}(c_j) \setminus v_i} \tanh\bigl(\frac{1}{2}L_{v_k \to c_j}^{(t)}\bigr)\Bigr)
\end{equation}

While our framework operates on categorical softmax distributions rather than binary LLRs, four conceptual properties of this decoding process directly motivate our design:

\textbf{(P1) The extrinsic information principle.} In BP decoding, a variable node computes its updated belief using its initial channel observation alongside incoming messages, strictly excluding its own previous outgoing messages. This structure prevents self-reinforcing feedback loops and ensures that each iteration introduces genuinely independent information. This principle motivates our belief update rule: each LLM's belief is consistently recomputed from its base softmax distribution alongside new check node messages, which effectively prevents echo-chamber dynamics during collaboration.

\textbf{(P2) Channel reliability governs update strength.} The decoder weights the contribution of each variable node according to the reliability of its underlying channel. Trusted observations from high-reliability channels naturally resist correction, whereas observations from low-reliability channels are readily overridden by neighboring constraints. This mechanism directly inspires our adaptive asymmetric damping. Historically accurate models are structurally protected from noise, while weaker models are actively guided toward the consensus established by stronger agents.

\textbf{(P3) Heterogeneous constraints capture complementary information.} Check nodes in LDPC codes enforce parity rules across diverse subsets of variables. Each constraint provides an independent measure of consistency. While no single check node possesses complete information, their collective application resolves ambiguity. This architecture motivates our use of six distinct check node types. Each type captures a unique dimension of inter-model consistency (such as margin, calibration, domain expertise, relational accuracy, agreement, and dissent), which are then combined to refine the overall belief state.

\textbf{(P4) Belief softening for effective decoding.} Overconfident channel estimates cause BP to fail because the decoder treats noisy observations as absolute certainties and ignores corrective messages. Proper calibration of channel log-likelihood ratios is therefore essential for algorithm convergence. This necessity inspires our belief softening stage. Before message passing begins, peaked LLM distributions are smoothed via temperature scaling to enable effective probability mass transfer across the graph.

% ═══════════════════════════════════════════════════════════════
\section{Dataset Details}
\label{app:Dataset_details}
% ═══════════════════════════════════════════════════════════════

We evaluate the robustness of our framework using four benchmarks that represent distinct challenges in large language model:

MMLU (Massive Multitask Language Understanding): Covers 57 subjects across STEM, the humanities, and social sciences. To ensure statistical significance while maintaining computational efficiency, we employ stratified sampling of 50 questions per subject from the test split~\cite{hendrycks2020measuring}.

MMLU-Pro: A rigorous extension of MMLU that expands the candidate set to 10 choices. We use stratified sampling of 150 questions per category from the test split across 14 categories. This format tests the stability of model beliefs under high-entropy conditions where random guessing is less likely~\cite{wang2024mmlu}.

GPQA (Graduate-Level Google-Proof Q\&A): Evaluates expert-level scientific reasoning in biology, physics, and chemistry. We utilize the full GPQA-Diamond split, which is designed to be "Google-proof" even for non-expert humans~\cite{rein2023gpqa}.

MedMCQA (Medical Multiple Choice Q\&A): Assesses specialized clinical knowledge based on medical entrance exams. We sample from the validation split to ensure access to ground-truth labels for performance validation, focusing on the framework's ability to resolve conflicts in high-stakes domain-specific reasoning~\cite{pal2022medmcqa}.

The adoption of these stratified sampling protocols ensures a fair and balanced comparison with prior work, confirming that our framework's gains are consistent across diverse knowledge densities and reasoning complexities.

% ═══════════════════════════════════════════════════════════════
\section{Prompts}
\label{app:prompts}
% ═══════════════════════════════════════════════════════════════

Our method requires only a single prompt per model per question for belief extraction. No additional prompts are used for inter-model communication, as all collaboration occurs at the logit level through mathematical message passing.

\paragraph{Belief Extraction Prompt (Zero-Shot).}

\noindent
\begin{tcolorbox}[
    colback=gray!5,
    colframe=gray!50,
    title=Belief Extraction,
    width=\linewidth
]
\ttfamily
Answer the following multiple choice question.\\[4pt]
Question: {question}\\[4pt]
A. choice\_1\\
B. choice\_2\\
C. choice\_3\\
D. choice\_4\\[4pt]
The answer is
\end{tcolorbox}
The model's response is never decoded as text. Instead, we extract the logit values for the answer tokens (A, B, C, D) at the final position and apply softmax to obtain the belief distribution $\mathbf{b}_i \in \Delta^{K-1}$. This constitutes a single forward pass with a single output token, yielding both the prediction and its full uncertainty structure at minimal computational cost.

\section{Extended Analysis}
\label{app:Extended_analysis}

\subsection{Extended Scalability Analysis}
\label{app:Scalability}

Table~\ref{tab:gpqa_10agents} presents the results of an expanded ensemble configuration on the GPQA-Diamond benchmark. We observe a notable performance degradation in standard text-based aggregation methods as the ensemble size increases. For instance, MoA drops to 31.82\%, performing worse than many single-agent baselines. This performance collapse is primarily due to logical interference: as more agents are added, the probability of a weaker model introducing a persuasive but incorrect natural language argument increases, which can mislead the final aggregator.

In contrast, our framework achieves an accuracy of 41.82\%. By resolving conflicts at the semantic layer rather than through iterative text exchange, our system successfully filters out noise from lower-performing agents while distilling the collective signal of the experts. These results demonstrate that our belief propagation approach is structurally more robust to the dilution of expertise that typically plagues large-scale heterogeneous LLM ensembles.

\begin{table}[h]
\caption{Scalability analysis on GPQA with 10 agents. Our method remains robust as the ensemble size increases, outperforming text-based multi-agent baselines that degrade when weaker agents are added.}
\label{tab:gpqa_10agents}
\centering
\small
\begin{tabular}{lc}
\toprule
\textbf{Method} & \textbf{GPQA} \\
\midrule
Refine (10 agents)~\cite{madaan2023self}        & 35.86 \\
MoA (10 agents)~\cite{wang2406mixture}            & 31.82 \\
Self-MoA (10 agents)~\cite{li2025rethinking}       & 35.86 \\
ReConcile (10 agents)~\cite{chen2024reconcile}      & 37.89 \\
\textbf{Ours} (10 agents) & \textbf{41.82 }\\
\bottomrule
\end{tabular}
\end{table}

\subsection{Domain-Specific Performance}
As shown in Table~\ref{tab:domain_results}, our framework achieves consistent gains across all MMLU domain categories. Compared to multi-agent baselines such as LLM Debate~\cite{du2024improving} and DyLAN~\cite{liu2024dynamic}, our method obtains the highest overall accuracy of 79.8\% while requiring only 3 API calls. The improvements are especially strong in Humanities, Social Sciences, STEM, and Other domains, indicating that semantic-layer fusion remains robust across diverse knowledge areas while avoiding the overhead of iterative text-based debate.

\begin{table}[h]
\caption{Domain-specific performance on MMLU. Our method achieves the highest accuracy across all domain categories while requiring fewer API calls than text-based multi-agent baselines.}
\label{tab:domain_results}
\centering
\begin{tabular}{lcccccc}
\toprule
Method & Humanities & Social Sciences & STEM & Other & Overall & \#API Calls \\
\midrule
Random                              & 25.0 & 25.0 & 25.0 & 25.0 & 25.0 & -- \\
Single Exec.                        & 59.8 & 74.0 & 62.9 & 71.8 & 66.4 & 1.00 \\
LLM-Blender \cite{jiang2023llm}     & 60.4 & 75.2 & 66.3 & 70.7 & 67.3 & 6.00 \\
LLM Debate \cite{du2024improving}   & 59.8 & 77.4 & 69.0 & 75.5 & 69.3 & 12.00 \\
DyLAN \cite{liu2024dynamic}         & 62.1 & 79.1 & 69.7 & 75.5 & 70.5 & 4.39 \\
\midrule
\textbf{Ours} & \textbf{76.8} & \textbf{86.5} & \textbf{76.4} & \textbf{82.4} & \textbf{79.8} & \textbf{3.00} \\
\bottomrule
\end{tabular}
\end{table}

\newpage

\end{document}